\newcommand{\revise}[1]{{\textcolor{Black}{#1}}}
\begin{document}

\title{A note on post-treatment selection in studying racial
  discrimination in policing\footnote{Accepted manuscript.}}

\author[1]{Qingyuan Zhao} \affil[1]{Statistical Laboratory, Department
  of Pure Mathematics and Mathematical Statistics, University of
  Cambridge\thanks{qyzhao@statslab.cam.ac.uk}}

\author[2]{Luke J Keele} \affil[2]{Department of Surgery, Perelman
  School of Medicine, University of Pennsylvania\thanks{luke.keele@uphs.upenn.edu}}

\author[3]{Dylan S Small} \affil[3]{Department of Statistics, Wharton
  School, University of Pennsylvania\thanks{dsmall@wharton.upenn.edu}}

\author[4]{Marshall M Joffe} \affil[4]{Department of Biostatistics,
  Epidemiology and Informatics, Perelman School of Medicine,
  University of Pennsylvania\thanks{mjoffe@mail.med.upenn.edu}}

\date{\today}

\maketitle

\begin{abstract}
  We discuss some causal estimands used to study racial
  discrimination in policing. A central challenge is that not all
  police-civilian encounters are recorded in administrative datasets
  and available to researchers. One possible solution is to consider the
  average causal effect of race conditional on the civilian already
  being detained by the police. We find that such an estimand can be
  quite different from the more familiar ones in causal inference and
  needs to be interpreted with caution. We propose using an
  estimand new for this context---the causal risk ratio, which has more
  transparent interpretation and requires weaker identification
  assumptions. We demonstrate this through a reanalysis of the NYPD
  Stop-and-Frisk dataset. Our reanalysis shows that the naive
  estimator that ignores the post-treatment selection in
  administrative records may severely underestimate the disparity in
  police violence between minorities and whites in these and similar
  data.
\end{abstract}

\vfil

\subsection*{Acknowledgement}

The authors declare no ethical issues or conflicts of interest in this
research. The authors thank Dean Knox, Joshua Loftus, Jonathan
Mummolo, and four anonymous reviewers for their helpful
suggestions. Research data that support the findings of this study are
openly available in the APSR Dataverse at
\href{https://doi.org/10.7910/DVN/ZQMYII}{DOI:10.7910/DVN/ZQMYII}.

\clearpage

\doublespacing

\section*{Introduction}

Evidence of racial disparities in policing is an urgent and highly
relevant policy question in empirical research.
\revise{A growing number of studies have focused on this critical topic
\citep{eckhouse2017descriptive,edwards2019risk,christiani2021better,
baumgartner2018suspect,
shoub2020race,
epp2020use}.} However, studies
  of racial disparities are fraught with methodological challenges
  \citep{ridgeway2006assessing,ridgeway2009doubly,goel2016precinct}.
  Recent work by \citet[hereafter KLM]{knox2020administrative} provides
  important new results on the difficulties of learning about racial disparities in
  policing from administrative data. One key point made by KLM is that
such investigations have an intrinsic selection bias, because
administrative records only contain those encounters in which
civilians are detained. If there is racial discrimination in police
detainment in the first place, any naive analysis using the
administrative data may then suffer from potentially severe selection
bias.

Here, we present a research note on this important topic with two
  purposes. First, KLM focused on several local causal
  estimands that are being used in the empirical studies. We demonstrate
  that these local estimands---even when identified with observational
  data---cannot be used to make inferences about
  more global effects like the average treatment effect.
  Second, we introduce a global causal risk ratio estimand that is
  straightforward to interpret and requires fewer assumptions
  to identify than either the local effects considered by KLM or global
  risk differences. Although it still depends on some quantities that
  need to be estimated from external data, we demonstrate how we can use
  Bayes' formula to avoid the hard problem of estimating
  the probability of detainment in police-civilian encounters.
  We conclude this research note with a reanalysis of the
  New York City Police Department (NYPD) Stop-and-Frisk dataset
  and some further discussion. Our empirical results show that a
  naive analysis of police administrative datasets that ignores the
  selection bias can severely underestimate the risk of police force
  for minorities. We present results that suggest a naive approach may
  understate the effect of civilian race on risk of police violence by a
  factor of 10 or more.

\section*{Review}

We begin with a brief review of the key quantities in KLM. Following their work, the unit of analysis is an encounter between civilians and police, \revise{where an encounter is defined as all events in which the police sight a civilian, including those in which a civilian is allowed to pass undisturbed}.
There are $n$ encounters indexed by $i = 1,\ldots,n$.  We denote the outcome with $Y_i$, where $Y_i=1$ indicates the use of force by the
police in encounter $i$. Next, $D_i$ is a binary variable where $D_i=1$ records the race of the civilian as a minority. While the
race of the civilian is not manipulable, we adopt the approach in KLM where the counterfactual is the replacement of the civilian in an
encounter with a separate, comparable civilian engaged in comparable behavior, but differing on race \citep[p.\ 621]{knox2020administrative}.
We use $M_i$ to indicate a police detainment or stop of a civilian. Critically, $M_i=1$ for the subset of encounters that resulted in a stop by the police and are present in the administrative data. Finally, $X_i$ represents a collection of covariates that describe aspects of the stops in the data. These could include
measures for time of day, location, age, sex, and civilian behavior at
the time when first encountered by police. Unless stated otherwise,
conditioning on $X$ is implicit.

For formal causal inference, we introduce the potential outcomes for
$M_i$ and $Y_i$. We have the potential mediator $M_i(d)$ which
represents whether encounter $i$ would have resulted in
a stop if civilian race is $d$. Next, $Y_i(d,m)$ is the potential
outcome for the use of force if race is $d$ and the mediating variable
is set to $m$; similarly, $Y_i(d)$ is the potential outcome if race is
$d$. Throughout this note we make the stable unit treatment assumption
(SUTVA), so $M_i(D_i) = M_i$ and $Y_i(D_i, M_i) = Y_i(D_i) =
Y_i$. This assumption means that the observed mediator (detainment)
  and outcome (use of force) are consistent with their corresponding
  counterfactual values. Hereafter, we assume the variables
  $D_i,M_i,Y_i$ and the potential outcomes of $M_i$ and $Y_i$ are drawn
  independently from the same unknown distribution. To simplify the
  exposition, we will drop the $i$ subscript.

KLM studied the following ``naive'' treatment effect estimand:
\begin{equation}
  \label{eq:naive-ate}
  \Delta = \E[Y \mid D = 1, M = 1] - \E[Y \mid D = 0, M = 1],
\end{equation}
where $\E$ denotes expectation over a random police-civilian
encounter. Intuitively, $\Delta$ compares the average rates
of force between different racial groups who are detained by
police. KLM showed that, if there is racial discrimination in detainment and an
unmeasured confounder between detainment and use of force (see
\Cref{fig:dag}), the naive treatment effect $\Delta$ can be quite misleading when used to
represent the \emph{causal} effect of race on police violence.

\begin{figure}[t] \centering
  % \begin{subfigure}[b]{\textwidth} \centering
  \begin{tikzpicture}[minimum size = 0.3cm]
    \node[circle,draw,name=d]{$D$}; \node[circle,draw,name=m,right= of
    d]{$M$}; \node[circle,dashed,draw,name=u,above right= of m]{$U$};
    \node[circle,draw,name=y,below right= of u]{$Y$}; \draw[->] (d) to
    (m); \draw[->,out=-30,in=-150] (d) to (y); \draw[->] (m) to (y);
    \draw[->,dashed] (u) to (m); \draw[->,dashed] (u) to (y);
  \end{tikzpicture}

  \caption{KLM's directed acyclic graph (DAG) model for racially
    discrimination in policing with an unmeasured mediator-outcome confounder
    $U$. The treatment $D$ is race of the civilian. The
      mediator $M$ is an indicator for police detainment and the
      outcome $Y$ is an indicator for police use of force.
    Administrative records only contain observations with $M=1$.}
  \label{fig:dag}
\end{figure}
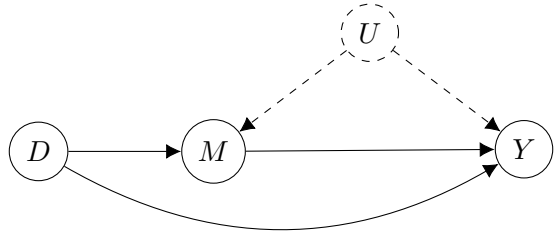

The key issue is that the structure of the data implies all estimates
are conditional on $M$---a post-treatment variable, which often leads to biased
estimators of the causal effect \citep{rosenbaum1984consequences}. Bias of
this type occurs in many applied problems in social science
\citep{elwert2014endogenous,montgomery2018conditioning} and medicine
\citep{paternoster2017genetic}.

Using the principal stratification framework of
\citet{frangakis2002principal}, KLM showed that it is still possible to
either identify or partially identify
certain forms of average treatment effects using a set of tailored
causal assumptions. These assumptions include mandatory reporting,
mediator monotonicity, and treatment ignorability. Specifically,
KLM derived nonparametric bounds for the average treatment
  effect of race on use of force among those who are detained by the
  police:
\begin{equation*} \text{ATE}_{M=1} = \E[Y(1) - Y(0) \mid M=1].
\end{equation*}
They also derived a point identification formula for the
  average treatment effect among those who are minorities and detained
  by the police:
\begin{equation*} \text{ATT}_{M=1} = \E[Y(1) - Y(0) \mid D=1, M=1].
\end{equation*}
Notice that their results rely on an external estimate of the
proportion of \revise{racially motivated detainments among all reported
minority detainments},
i.e., $\P(M(0) = 0 \mid D = 1, M = 1)$. See KLM (p. 631) for
discussion on estimating this quantity. Moreover, KLM also derived an
identification formula for the average treatment effect $\text{ATE} =
\E[Y(1) - Y(0)]$ given external estimates of \revise{the rate of detainments}
$\P(M = 1 \mid D = d)$ by \revise{race} $d = 0,1$.

The identification results in KLM depend crucially on the following
assumption:
\begin{assumption}[Mandatory reporting] \label{assump:m0y0} (i)
  $Y(0,0) = Y(1,0) = 0$ and (ii) the administrative data contains all
  \revise{detainments/stops of civilians by the police}.
\end{assumption}
The first part of this assumption assumes that there will be no
  police violence if the civilian is not stopped in the first place. The
  second part assumes we observe a sample from the conditional
  distribution of the variables given $M=1$, which is essential for
  statistical inference. We will make \Cref{assump:m0y0} throughout this
  note and further discuss its practical implications before the real
  data analysis.

%%%%%%%%%%% Next Section %%%%%%%%%%%%%%%%%
\section*{Average treatment effects conditional on the mediator}

In many causal analyses, investigators are focused on the sample average treatment effect (ATE),
which is the average difference in potential outcomes averaged over the study population. At times,
researchers define the ATE over specific subpopulations which makes
the ATE more local; e.g., the average treatment effect might be
defined for the subpopulation exposed to the treatment or the average
treatment effect on the treated (ATT). Often the ``global''
ATE is the goal in many studies and is preferred over more local
effects \citep[ch. 2]{Gerber:2012}. For example, IV studies have been
strongly critiqued for identifying a local average treatment effect
(LATE) instead of the global ATE
\citep{deaton2010instruments,swanson2014think}. Moreover, even some
defenders of IV studies view the LATE as a ``second choice'' estimand
compared to the global ATE \citep{imbens2014instrumental}.

As KLM outline, the global ATE has not generally been the target
causal estimand in this literature. Instead, researchers have focused
on $\text{ATE}_{M=1}$ and $\text{ATT}_{M=1}$ which are both
conditional on the mediator $M$. Notice that these estimands are not only
more local than the global ATE but also condition on a post-treatment
quantity. Nonetheless, they are not the first estimands in causal
inference that condition on post-treatment quantities. Other examples
of estimands that condition on post-treatment quantities include the
survivor average treatment effect in \citet{frangakis2002principal}
(though conceptually the always survivor principal stratum can be
thought as a pretreatment variable), effect modification by a
post-treatment quantity
\citep{Stephens:2013a,ertefaie2018discovering}, and the probability of
causation $\P[Y(0) = 0 \mid D=1, Y = 1]$
\citep{robins1989probability,pearl1999probabilities,dawid2017probability}.

The local effects in this context may have important policy
  relevance. As such, the preference for a global ATE may not always be warranted
  in this domain. However, \revise{an} inexperienced researcher might think these
estimands are informative about the
global ATE or even an estimand such as the controlled direct effect:
$\E[Y(1,1) - Y(0, 1)]$.  Next, we build upon the population
stratification framework in KLM and clarify the difference between the
conditional estimands in KLM and estimands like the global ATE.

To simplify the illustration, we will consider the case
where there is no mediator-outcome confounder (i.e.\ no variable $U$
in the diagram in \Cref{fig:dag}). The issues we describe below will still
occur if there is mediator-outcome confounding. In mediation analysis,
a standard way to decompose the average treatment effect is
\[ \text{ATE} = \E[Y(1) - Y(0)] = \E\big[Y(1, M(1)) - Y(1, M(0))\big]
  + \E\big[Y(1, M(0)) - Y(0, M(0))\big].
\] The two terms on the right hand side are called the pure indirect
effect (PIE) and pure direct effect (PDE)
\citep{robins1992identifiability}. Under the Non-Parametric Structural
Equation Model with Independent Errors (NPSEM-IE) model
\citep{pearl2009causality,richardson2013single} and
\Cref{assump:m0y0}, they can be expressed as
\begin{align*} \text{PIE} = \beta_M \cdot \E[Y(1,1)],~\text{PDE} =
  \beta_Y \cdot \E[M(0)],
\end{align*} where $\beta_M = \E[M(1) - M(0)]$ is the average effect
of race on detainment and $\beta_Y = \E[Y(1,1) - Y(0,1)]$ is the
controlled direct effect of race on police violence (See the
Appendix). An immediate consequence of the above expressions
is that
\begin{equation}
  \label{eq:consistent-property} \text{ATE} \ge
  0~\text{if}~\beta_M,\beta_Y \ge 0~\text{and}~\text{ATE} \le
  0~\text{if}~\beta_M,\beta_Y \le 0.
\end{equation}
In words, the global ATE is nonnegative whenever both \revise{the direct and
indirect effects}
are nonnegative, and vice versa. This same property also holds for the
ATT because in the simple
setting here the treatment $D$ is completely randomized.

In the Appendix, we use principal stratification to show
that neither $\text{ATE}_{M=1}$ or $\text{ATT}_{M=1}$ is guaranteed to
inherit the sign of $\beta_M$ and $\beta_Y$ and satisfy the property
in Equation~\eqref{eq:consistent-property}. Specifically, we outline
concrete examples in which:
\begin{enumerate}
\item \revise{The pure direct and indirect effects are both positive},
  but $\text{ATE}_{M=1} < 0$;
\item \revise{The pure direct and indirect effects are both negative},
  but $\text{ATE}_{M=1} > 0$ and $\text{ATT}_{M=1} > 0$.
\end{enumerate}
That is, when there is racial discrimination of the same
direction in both police detainment and the use of force, it is still
possible for $\text{ATE}_{M=1}$ and $\text{ATT}_{M=1}$ to have the
opposite sign. We refer the reader to the Appendix for some concrete
counterexamples and further comments on this phenomenon.

In sum, \revise{the local estimands} $\text{ATE}_{M=1}$ and
$\text{ATT}_{M=1}$ are generally
different from the global estimands that are routinely the target in
causal analyses. As such, we urge applied researchers to use caution
when using these local estimands to infer anything about the global
estimands.

%%%%%%%%%%% Next Section %%%%%%%%%%%%%%%%%

\section*{A new estimator for the causal risk ratio}
\label{sec:causal-risk-ratio}

KLM also derived an identification formula for $\text{ATE}_{M=1}$ using external
estimates of \revise{the rate of detainment} $\P(M = 1 \mid D = d)$
for \revise{race} $d=0,1$. Unfortunately, it is
often difficult to quantify the frequency of stops among all
police-civilian encounters, as noted in their paper. In particular, it can be
difficult to determine the magnitude of $\P(M = 1 \mid D = d)$. Here,
we show that by formulating the estimand on a relative scale, we can
avoid this difficulty and obtain point identification.

More specifically, we consider the following causal risk ratio (CRR) for
covariate level $x$:
\[
  \text{CRR}(x) = \frac{\E[Y(1) \mid X = x]}{\E[Y(0) \mid X = x]}.
\]
When this
term is equal to one the risk of police violence does not vary with
the race of the civilian. When this term is greater than one, the risk
of violence is higher for minorities. \revise{Risk ratios, while not commonly used in political science,
have been used in the literature on policing \citep{eckhouse2017descriptive,edwards2019risk,christiani2021better}.
However, previous use of risk ratios has tended to be descriptive way rather than as causal quantities.
Moreover, risk ratios can be a powerful rhetorical tool for understanding discussing racial disparities.
In the context of police violence, it may be tempting to use the
following ratio to measure racial disparities:}
\[
  \text{Naive risk ratio} = \frac{\E[Y \mid D = 1, M = 1, X = x]}{\E[Y
    \mid D = 0, M = 1, X = x]}.
\]
\revise{This quantity divides the rates of police violence experienced
by minorities and non-minorities, given that they have the same
covariate $x$ and are detained by the police. We will see below that
the naive risk ratio is generally not the same as the causal risk
ratio due to conditioning on the colliding variable $M$ (detainment);
in fact, these two quantities can be drastically different.}

Expressing results in a relative fashion can be an effective way of
communication
especially when the risk of police violence is fairly low among a
specific population. For example, let's say in
one specific locale, the risk of police violence for Black residents
is .01\% and is .001\% for white residents. The difference in these
risks is obviously very small. However, in relative terms, the risk of
police violence is 10 times higher for black residents than for white
residents. As such, even if the absolute risk is low, a large increase
in relative risk is likely to be of significant interest.

Using treatment ignorability (i.e.\ the DAG model in \Cref{fig:dag} conditional on
$X$) and \Cref{assump:m0y0}, the causal effect of race can be
identified based on the decomposition
\[ \E[Y(d) \mid X = x] = \E[Y \mid M = 1, D = d, X = x] \cdot \P(M = 1
  \mid D = d, X = x),~\text{for}~d=0,1.
\]
The same result is derived in KLM and forms the basis of their
identification of the ATE. We simplify their proof in
the Appendix and show that some of KLM's
identification assumptions can be relaxed. Specifically,
we can arrive at the same result without invoking mediator
monotonicity and relative nonseverity of racial stops (Assumptions 2
and 3 in KLM).

By using Bayes formula for the last term on the right hand side (see
the Appendix), we obtain the following identification
result:
\begin{equation}
  \label{eq:rr-identification} \text{CRR}(x) = \underbrace{\frac{\E[Y \mid D = 1, M
      = 1, X = x]}{\E[Y \mid D = 0, M = 1, X = x]}}_{\text{naive risk ratio}}
  \cdot \underbrace{\Big\{ \frac{\P(D = 1 \mid
      M = 1, X = x)}{\P(D = 0 \mid M = 1, X = x)} \Big\} \Big/ \Big\{\frac{\P(D = 1 \mid
      X = x)}{\P(D = 0 \mid X = x)}\Big\}}_{\text{bias factor}}.
\end{equation}
Therefore, by targeting the causal risk ratio, we are able to avoid
the difficulties associated with estimating \revise{the absolute rate of
detainment} $\P(M=1)$ through cancellation.

The first term on the right hand side of \eqref{eq:rr-identification}
is the naive risk ratio estimand conditional on baseline
covariates. It is the risk ratio counterpart to the naive risk difference
in \eqref{eq:naive-ate} and both of them ignore the possible
bias from the selection process into the administrative data. The
second term inside the curly brackets  is a ratio of probability
ratios. The first ratio of
probabilities measures the relative probability of an \revise{detainment}
being with a minority conditional on \revise{covariate} $X=x$, \revise{which can be
  estimated from} the administrative data. The second ratio also measures the relative
probability (odds) of an encounter being with a minority conditional
on \revise{covariate} $X=x$, but these probabilities need to be \revise{approximated or
bounded with} a second data source. This ratio
between the last two terms is thus an odds ratio that characterizes
the bias of the naive estimator; for this reason, we call it the
``bias factor.'' That is, if minorities are over-represented in the
administrative data, the bias factor corrects that over-representation
and so increases the magnitude of the risk ratio. For example, if the
probability of a \revise{detainment} being with a minority is 0.8 in
the administrative data and
0.25 in a random police-civilian encounter, the bias factor would be
$(0.8/0.2) / (0.25/0.75) = 12$, which would increase the magnitude
of the naive risk ratio when it is larger than 1. All the terms in
\eqref{eq:rr-identification} can be estimated using generalized linear
models (such as logistic regression), or one could use more flexible
models. \revise{As we highlight below, the available data sources that
can be used for estimating the third term in \Cref{eq:rr-identification} are
imperfect. We demonstrate how a sensitivity analysis can
be used to probe this deficiency. See the end of next section for an
example.}
Confidence intervals can be estimated using the
bootstrap or the delta method.

Note that if we are willing to assume stochastic mediator monotonicity: $\E[M(1) \mid X =
x] \ge \E[M(0) \mid X = x]$ (that is, there is racial bias against the
minority in detainment), the bias factor can indeed be lower bounded
by $1$. In this case, the naive risk ratio (first term on
the right hand side of \eqref{eq:rr-identification}) provides a lower
bound for the causal risk ratio $\text{CRR}(x)$.

While the risk ratio estimand does avoid Assumptions 2 and 3 in KLM critical complications are still present.
That is, the constraints that tend to arise from the use of two data sources remain
a significant source of complexity. In particular, the administrative
dataset can only be used to estimate the first two
terms on the right hand side of \eqref{eq:rr-identification}.
We must find an additional data source that allows us to estimate \revise{the racial distribution conditional on the covariates---}$\P(D = 1
\mid X = x)$ and $\P(D = 0 \mid X = x)$, since the administrative data only contain those encounters where $M = 1$.
However, secondary data sources \revise{tend to also contain data on stops rather than encounters (sightings of
civilians by the police). As such, typically, we use population level data on police stops to approximate encounter rates by racial group.
To the extent these quantities are proportional, the method will be accurate.
However, to the extent these quantities differ, the measure will be biased.} Moreover, there may be measurement inconsistencies between the secondary data and the administrative data. \revise{This can be partly addressed by a sensitivity
    analysis; see the next section for an example. See also
    \citet{knox2020toward} for further discussion on the usage of
    external datasets in this context.}

\revise{Take the NYPD database of police stops as an example. This data source was used in KLM and will be reanalyzed in the next section.}
For a second data source, we will use the Current Population Survey (CPS),
which contains measures for race and also has geographic information that allows us to restrict the data to
the metro area in the state of New York (which is larger than the five
boroughs of New York City). However, The CPS does not contain any more
fine-grained geographic identifiers \revise{or any measures of police
encounters or stops}. Another data source \revise{we will use} is the
Police-Public Contact Survey (PPCS) collected by the U.S. Department
of Justice. However, PPCS is a national survey and geographic identifiers are not
available to researchers. As such, if we use the PPCS, we can do little to
measure the prevalence of police-minority interactions in New York
City. \revise{Additionally, the PPCS collects data on police stops and
not encounters}. As such, we cannot measure rates of encounters
with either data source.

In other settings such as traffic stops, one may use the ``veil
of darkness'' test \citep{grogger2006testing} and use night-time
police stops in the same dataset to estimate the bias factor, as
police are less likely to know the race of a motorist. However, this
still requires the assumption that the racial distribution of motorists
is the same during the day and at night. \revise{Moreover, data sources on encounters are exceedingly rare, and despite
the limitations}, as we show next,
the results using the risk ratio with different data sources can still be useful and illuminate the probable
bias in the naive estimator. They can also serve as the baseline of a sensitivity analysis.

We conclude this section, with a final comment on data constraints.
Identification of the risk ratio estimand as well as those derived in
KLM depend on mandatory reporting (\Cref{assump:m0y0}). It is
important to note that this assumption is both
a restriction on potential outcomes and a feature of the data collection.
The first part of the assumption says that the potential outcome
$Y(d,m)$ is equal to 0 whenever $m = 0$. This assumption is reasonable
because, besides inadvertent collateral damage, there should be
virtually no police violence if the civilian is not stopped by the
police in the first place. The second part of the assumption is needed
so that we can use the
administrative dataset to get the conditional distribution of
$(D,Y,X)$ given $M = 1$. For a given administrative data source, it is
possible that some police stops are unrecorded. If that is the case,
any analysis relying on \Cref{assump:m0y0} needs to be interpreted
with care. \revise{This is not a major concern in the NYPD dataset
  reanalyzed below, as all NYPD police officers are required to report
  all the stops.}

\section*{A reanalysis of the NYPD Stop-and-Frisk dataset}
\label{sec:reanalysis}

We used the identification formula \eqref{eq:rr-identification} to
estimate the causal risk ratio using the NYPD ``Stop-and-Frisk''
dataset analyzed in \citet{fryer2019empirical} and KLM. Specifically,
we use the replication data from KLM. As such, we followed KLM's
preprocessing of the dataset, with the one exception that we removed all races other than
black and white. We also focused on all forms of force rather than
estimate the effects for different types of force. We used CPS
2013 and PPCS 2011 data to estimate the
third term in
\eqref{eq:rr-identification}. \revise{See the end of this section for
  a sensitivity analysis where we perturb the estimates from census
  data.} Because PPCS does not contain a
geographic identifier, we also used the racial distributions for
different subsets of the PPCS data. Specifically, we used subgroups
for those in the survey that experienced a motor vehicle stop, any
other kind of police stop, and those in a large metro area. We
  further explored weighting the PPCS respondents by their reported
  number of face-to-face contacts with the police. Respondents with
  more than 30 reported contacts with the police were excluded in that
  analysis. See \Cref{sec:survey} for details on the exact survey
items we used in this analysis. \revise{As we noted above, neither CPS or
  PPCSD records police-civilian encounters per our definition
  (sighting of civilians), so they can only be regarded as
  approximations of the actual racial distribution in encounters.}

\Cref{tab:est} reports the estimated risk ratios using different
estimators and external datasets. Using the naive estimator---the
first term in \eqref{eq:rr-identification}, we find a modest causal
effect: black people have 29\% higher risk of the police using of
force than white people. Recall that we can view this as lower bound
on the true causal risk ratio if we are willing to assume stochastic
mediator monotonicity (i.e.\ there is discrimination against black
civilians in police detainments on average). The estimator
\eqref{eq:rr-identification} that adjusts for the selection bias shows
a very different picture. No matter which external dataset we used,
the estimated risk ratio for black versus white is always greater than
10.

\begin{table}[t]
  \centering
  \begin{tabular}{lcc}
    \toprule
    External dataset & Estimated risk ratio & 95\% Confidence interval \\
    \midrule
    \multicolumn{3}{c}{Naive estimator---First term in
    \eqref{eq:rr-identification}} \\
    None & 1.29 & 1.28--1.30 \\
    \midrule
    \multicolumn{3}{c}{Adjusted for selection bias by using
    \eqref{eq:rr-identification}} \\
    CPS & 13.6 & 12.8--14.3 \\
    PPCS & 32.3 & 31.3--33.3 \\
    PPCS (MV Stop) & 29.5 & 26.9--32.7 \\
    PPCS (Stop in Public) & 29.2 & 23.5--36.5\\
    PPCS (Large Metro) & 16.7 & 15.4--18.4\\
    PPCS* & 31.1 & 27.9--34.7 \\
    PPCS* (Large Metro) & 19.9 & 14.2--29.0
    \\
    \bottomrule
  \end{tabular}
  \caption{Estimates of the causal effect of minority race (black) on
    police violence. CPS is the Current Population Survey. PPCS is
    Police-Public Contact Survey. PPCS* is PPCS with the respondents
      weighted by their reported number of face-to-face contacts with
      the police. MV Stop is the subset of survey
    respondents that has been the passenger in a motor vehicle that
    was stopped by the police. Large Metro is the subset that lives in
    a region with more than 1 million population. Confidence intervals
    were computed using the nonparametric bootstrap.}
  \label{tab:est}
\end{table}

The estimates in \Cref{tab:est} did not condition on any
  covariate that confounds the effect of race on police use of
force. In \Cref{sec:strat-analys-age}, we report the results of a
stratified analysis by age and gender of the civilian. The estimates
are broadly consistent with those reported in \Cref{tab:est}, but it
appears that female minorities has a much smaller risk ratio (less
discriminated against) than male minorities. Age does not appear to be
an important effect modifier.

Another potentially important confounder is the location of the
police-civilian encounter. However, detailed geographic information is
not available in CPS or PPCS. The NYPD currently has 77 precincts that
are responsible for the law enforcement within a designated geographic
area. Using census blocks and the 2010 census data,
\citet{keefe2020precinct} constructed a population breakdown for each
NYPD precinct. This allows us to compare the proportion of black
residents (among black and white residents) with the proportion of
detainments of black civilians in each precinct (\Cref{fig:race}). It is evident
from this figure that in most of the precincts, black civilians make
up less than half of the population but more than half of the detainment
records. This shows that the bias factor in
\eqref{eq:rr-identification} can be quite large in this problem.

By using the census data to estimate the last term in
\eqref{eq:rr-identification}, \Cref{fig:precinct-estimator} compares
the naive risk ratio estimator and selection-adjusted risk ratio
estimator for each precinct. The selection-adjusted estimates are
almost always much larger except for three outliers---precincts 67 and
113, where Blacks account for more than 90\% of the population, and
precinct 22 (Central Park), where only 25 residents were
recorded, and the majority of police-civilian encounters were likely
with non-residents. It is likely that in these precincts, the
residential distribution in the census data poorly approximate the
racial distribution in police-civilian encounters, because the
civilians could be visitors from other precincts or anywhere else in the
world. Most of the precincts with the highest
estimated risk ratios are wealthy neighborhoods in Manhattan and Brooklyn. In
several precincts, our method estimated that the risk of police use of
force for Blacks is more than 30 times higher than the risk for
whites. This may be due in part to increased suspicion of minorities
in areas where there presence is not common. Finally,
\Cref{fig:precinct-causal-by-pop-a} shows a strong negative correlation
between the estimated risk ratios and the percentage of black
residents in the precinct. This indicates that the racial
discrimination in police use of force may be strongly moderated by
characteristics of the geographic location such as the racial
composition, affluence, and average crime rate of the neighborhood.

\revise{The above analysis relies on the assumption that the racial
  distribution in police-civilian encounters can be well approximated
  by the racial distribution in census or survey datasets. A sensitivity
  analysis can be useful to gauge the potential bias due to poor
approximations of the racial distribution in police-civilian encounters.}
\Cref{fig:precinct-causal-by-pop-b} presents such a sensitivity analysis,
  in which \revise{the civilians encountered with police are
  assumed to be} a mixture of local and city-wide
  residents. More precisely, this sensitivity analysis assumes that in
  each precinct, there is a
  90\% chance of the police encountering a local resident and a 10\%
  chance of the police encountering a resident from another precinct.
  According to the census data, 36.7\% of the population in New York City (excluding
  races other than black and white) was black in 2010. Thus, in this
  sensitivity analysis, the presumed proportion of encounters with black
  civilians is higher than the proportion of black residents in the
  precinct, if the proportion of black residents is lower than
  36.7\%. This shrinks the estimated causal risk ratio towards a common
  value, especially for precincts that are predominantly white or
  predominantly black, as shown in \Cref{fig:precinct-causal-by-pop-b}.

\begin{figure}[t]
  \centering
  \begin{subfigure}[t]{0.6\textwidth}
    \includegraphics[width =
    \textwidth]{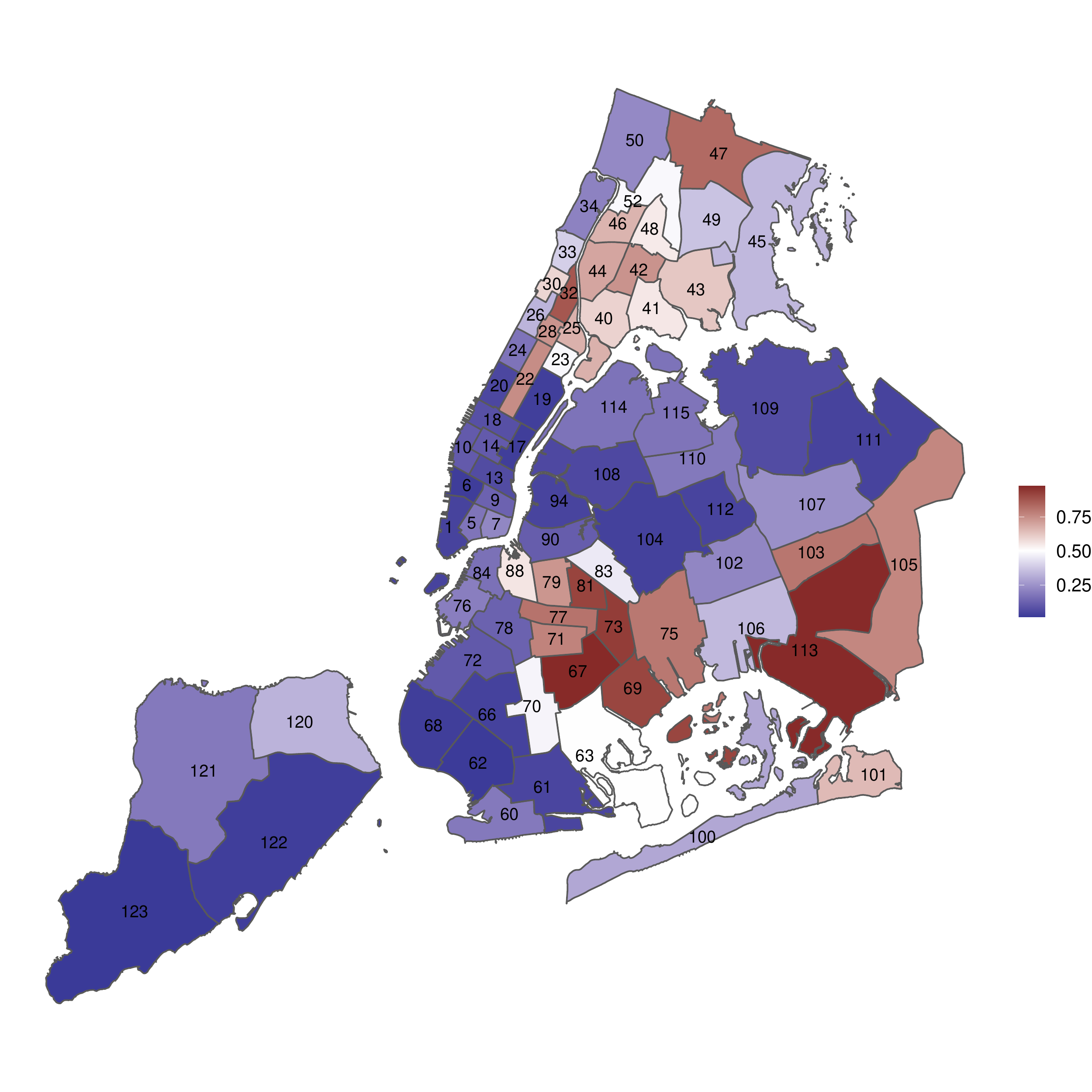}
    \caption{Proportion of black residents in the census data.}
  \end{subfigure} \\
  \begin{subfigure}[t]{0.6\textwidth}
    \includegraphics[width =
    \textwidth]{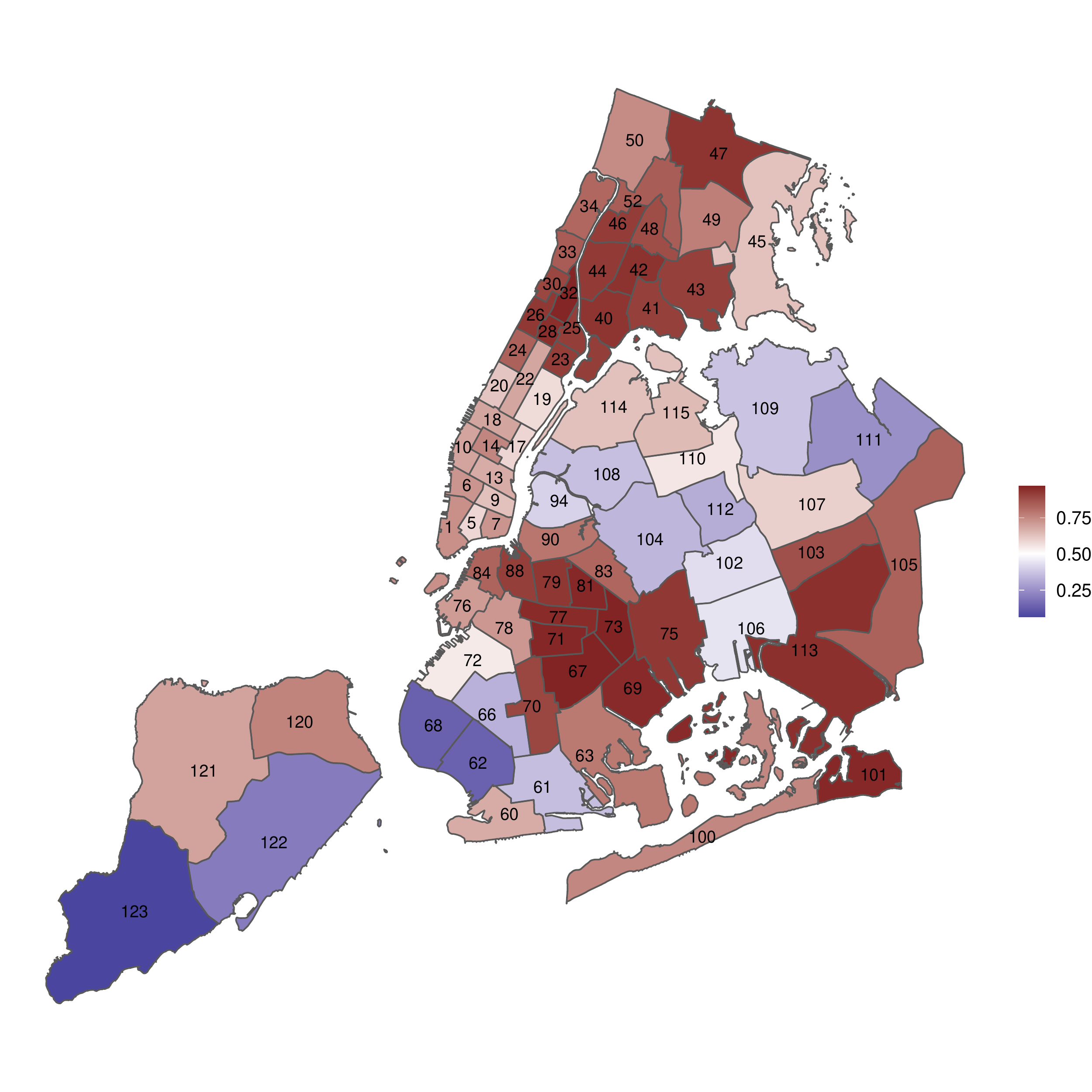}
    \caption{Proportion of detainments of black civilians in the NYPD
      stop-and-frisk data.}
  \end{subfigure}
  \caption{Racial distributions (indicated by
    the filled color) in each NYPD precinct.}
  \label{fig:race}
\end{figure}

\begin{figure}[t]
  \centering
  \includegraphics[height = 0.9\textheight]{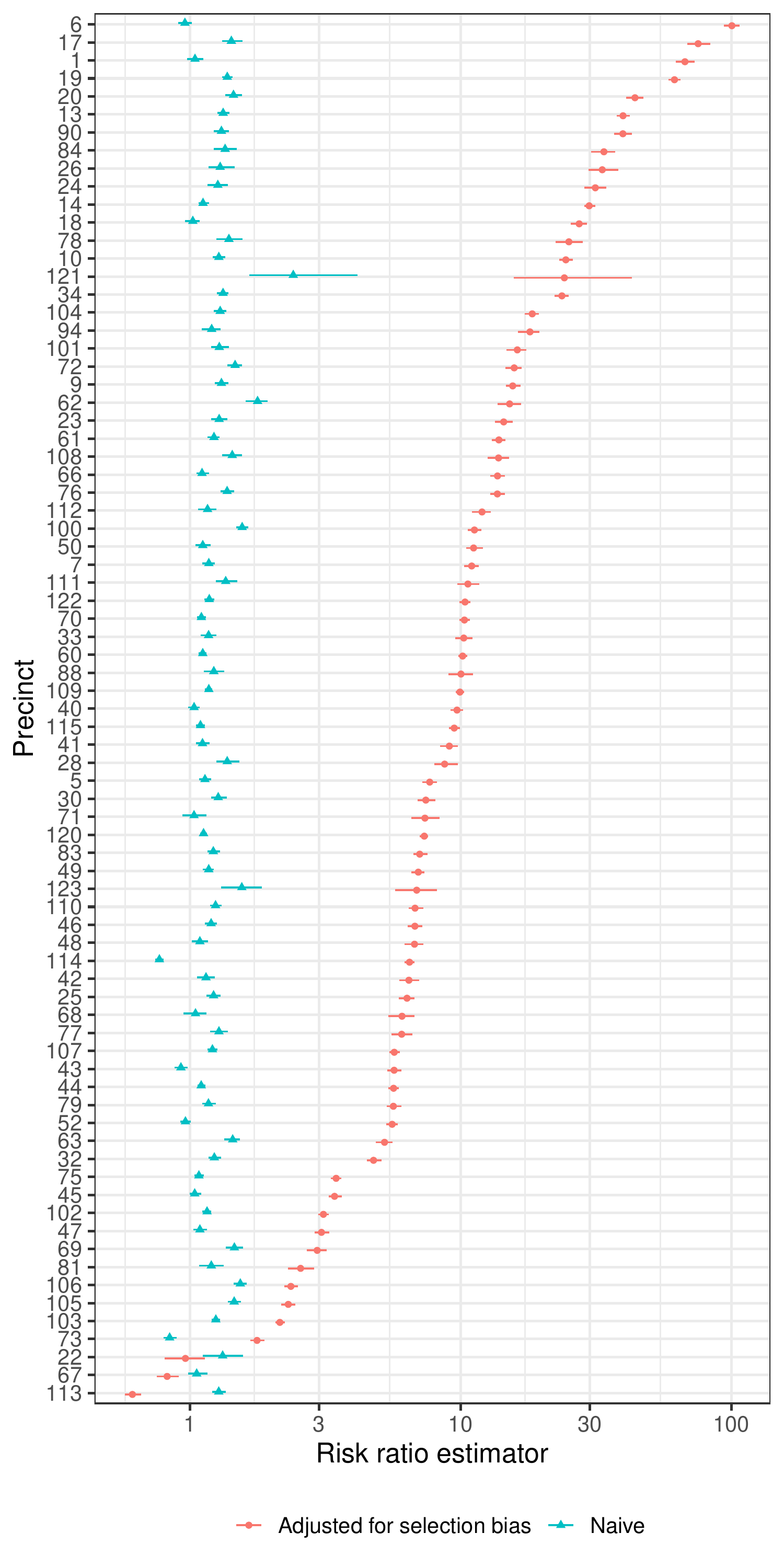}
  \caption{Risk ratio estimates for every NYPD precinct. Error bars
    correspond to 95\% confidence intervals computed by the
    bootstrap. We did not resample the census data because that is
    already the residential distribution (instead of a statistical
    estimate). Blue estimates are obtained using the naive estimator
    (first term in \eqref{eq:rr-identification}); Red estimates
    further take into account the bias factor due to sample selection
    in \eqref{eq:rr-identification}.}
  \label{fig:precinct-estimator}
\end{figure}

\begin{figure}[t]
  \centering

  \begin{subfigure}[t]{0.49\linewidth}
    \includegraphics[width
    =\textwidth]{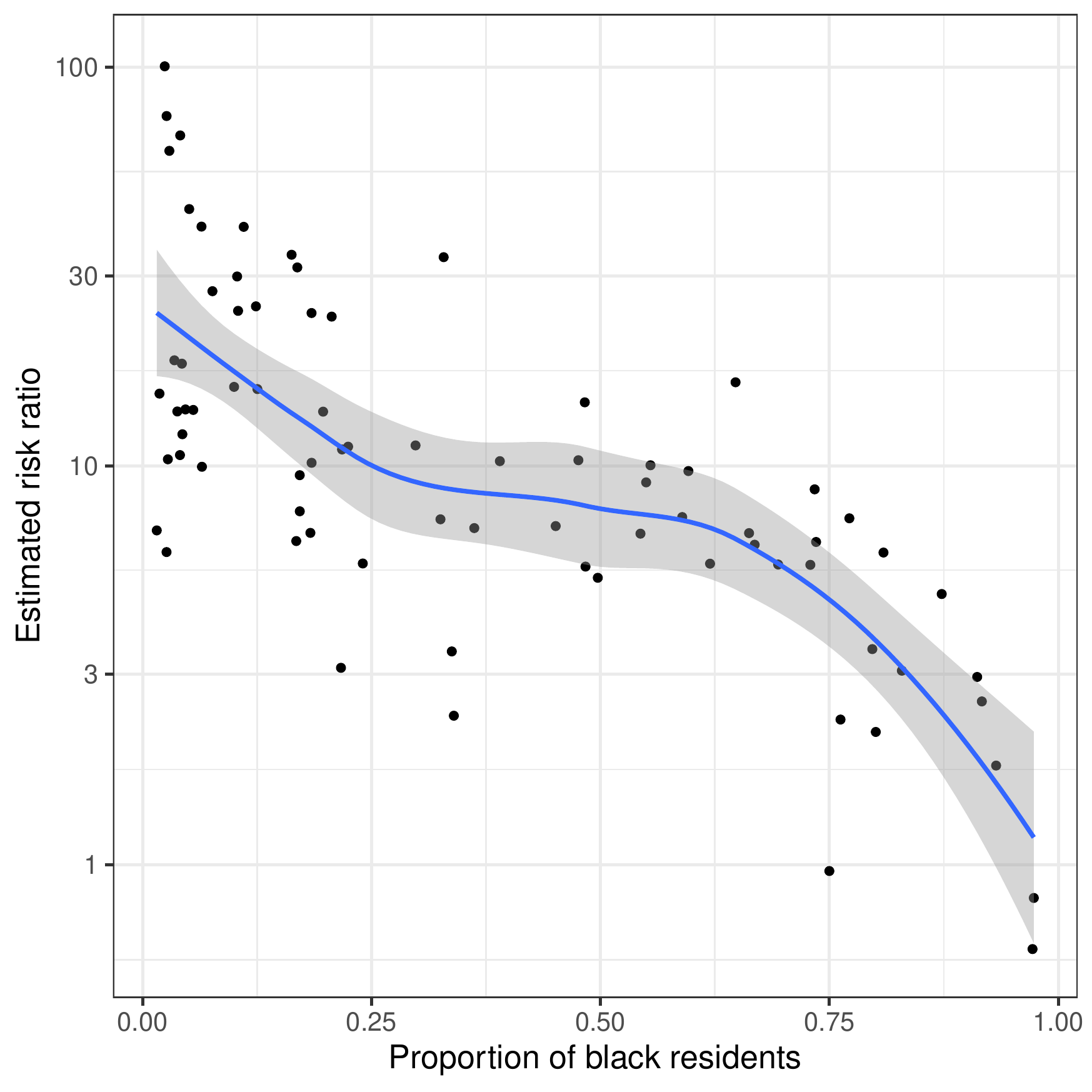}
    \caption{Estimated risk ratio versus proportion of black residents
      in each precinct.}
    \label{fig:precinct-causal-by-pop-a}
  \end{subfigure}
  \begin{subfigure}[t]{0.49\linewidth}
    \includegraphics[width
    =\textwidth]{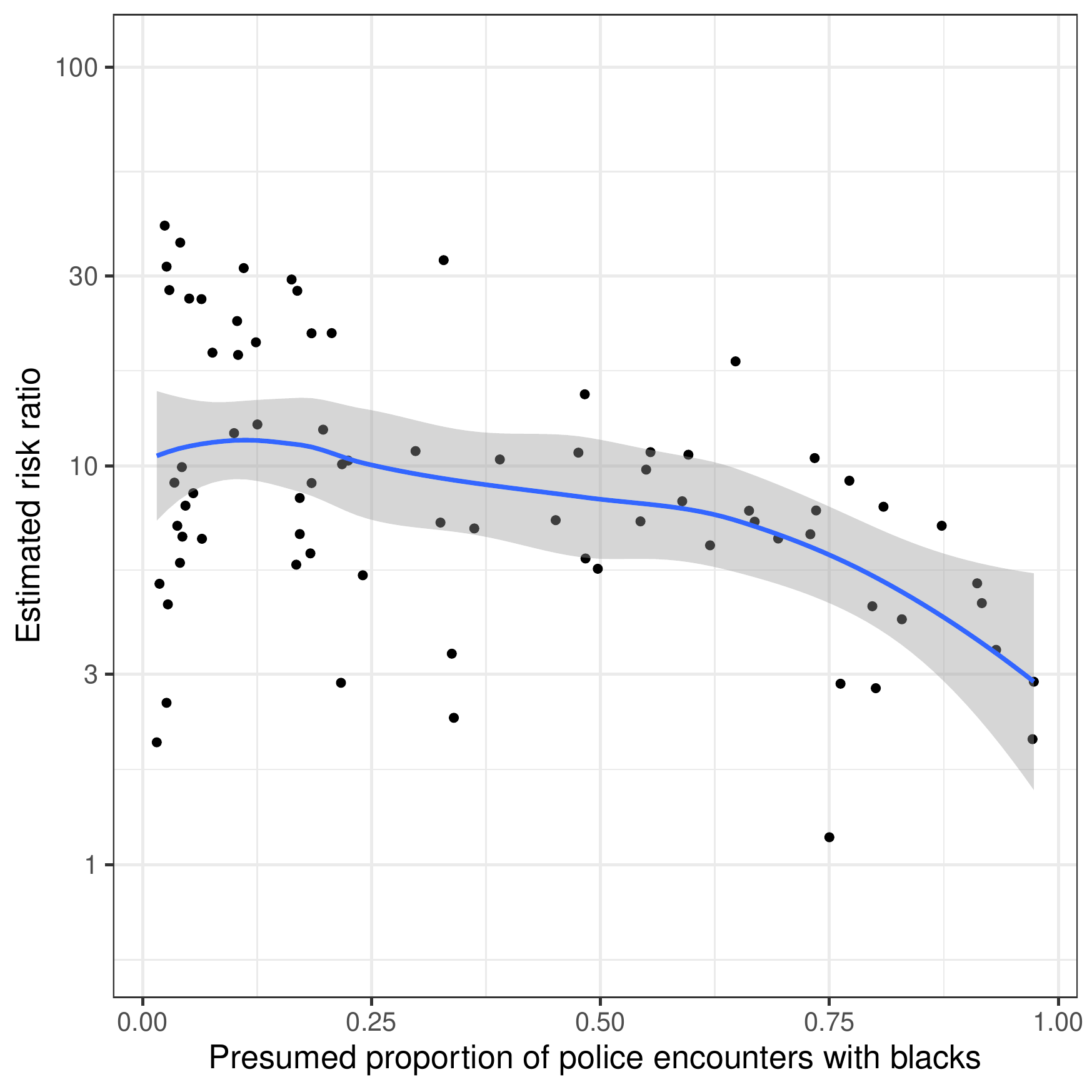}
    \caption{Estimated risk ratio in a sensitivity analysis
        versus proportion of black residents. In each precinct, we
        assume the police encounters a mixture of 90\% local residents
        in the precinct and 10\% city-wide residents.}
    \label{fig:precinct-causal-by-pop-b}
  \end{subfigure}
  \caption{Relationship between the risk ratio estimates (with
    confidence interval error bars) and the
    proportion of black residents across NYPD precincts. The blue
    curves and the shaded confidence regions were obtained by the function
    \texttt{loess} in \textsf{R} with the default span. Notice that
    to use the identification formula \eqref{eq:rr-identification} for
    the risk ratio correctly, we need to estimate the racial
    distribution in police-civilian encounters using an external
    dataset. The residential distribution is used as an approximation
    in \Cref{fig:precinct-causal-by-pop-a}, but it could be biased and
    exaggerate the effect modification as shown in the sensitivity
    analysis in \Cref{fig:precinct-causal-by-pop-b}. See
    Conclusions for further discussion.}
  \label{fig:precinct-causal-by-pop}
\end{figure}

\section*{Conclusions}
\label{sec:conclusions}

% \qz{I will write this in the end.}

In this research note, we studied some causal estimands in the context
of racial discrimination in policing. We found that the ATE that
\revise{conditions} on the mediator (police detainment) can differ in
  sign from the unconditional ATE and other routinely used causal
estimands, so extra caution is needed when using these estimands and
interpreting the results. We also proposed a new estimator for
the causal risk ratio, which is straightforward to interpret and avoids
the difficult task of discerning the percentage of stops in all
police-civilian encounters. In a reanalysis of the
NYPD Stop-and-Frisk dataset with causal risk ratio being the estimand,
we found that for blacks the risk of experiencing force is much higher
than for whites.

When interpreting the results of our reanalysis, the reader should
keep in mind its limitations. First, it is difficult to
find a good external dataset to estimate the bias factor. The datasets
we used should only be viewed as crude approximations to the racial
distribution in police-civilian encounters in New York City. Second, our
  measure of the causal risk ratio is conditional on covariates $X$;
  identification requires treatment ignorability conditional on
  confounders included in $X$. In principle, that would involve
  conditioning simultaneously on confounders like time, location, and
  other relevant characteristics of the police-civilian
  encounter. However, such covariates are not always available
  in external datasets and our analysis only conditions on NYPD
  precinct. Additionally, our method does not yet have a way to
  summarize over multiple covariate strata even if the conditional
  risk ratios are identified and estimated. \revise{Since we did not
    use visible features of the civilians that are associated with
  race and criminal activity (they are not available in the data)},
this may have led to overestimation of the effect of race on use of
force. It is highly implausible that this
  bias could fully explain the large measures of association found
  here. Finally, since New York
is a metropolitan in which
people move around a great deal on a daily basis,
the racial distribution of the residents in a precinct
might \revise{poorly} represents the racial distribution in
police-civilian encounters, especially when the residential
distribution is extreme as demonstrated in our sensitivity
  analysis. In other words, \Cref{fig:precinct-causal-by-pop-a}
may have exaggerated the effect modification by the racial
distribution of the local residents. A further analysis on carefully
selected precincts (e.g.\ residential areas with different racial
composition) is needed to better quantify the effect modification.

Nevertheless, our empirical results show that a naive analysis of
police administrative datasets that ignores the selection bias can
severely underestimate the risk of police force for
minorities. \revise{This also highlights the importance of defining
  the causal estimand clearly in observational studies.}
Further careful analyses are needed to better quantify the
racial discrimination in policing and understand the socioeconomic
factors that moderate racial discrimination.

Finally, we offer a concrete suggestion for applied analysts
  based on our results. KLM conclude by outlining a feasible
  research design for policing studies. Our risk-ratio based analysis and the
  associated sensitivity analysis are useful additions to their suggested
  research plan. Our methods provide useful complements to the analyses outlined
  by KLM. Any policing study will depend on strong assumptions and a broad
  set of results that agree will provide higher quality evidence.

\clearpage

\bibliographystyle{apsr_update}
\bibliography{police_ref}

@article{epp2020use,
	author = {Epp, Derek A and Erhardt, Macey},
	date-added = {2021-04-13 08:46:10 -0400},
	date-modified = {2021-04-13 08:46:10 -0400},
	journal = {Politics, Groups, and Identities},
	pages = {1--14},
	publisher = {Taylor \& Francis},
	title = {The use and effectiveness of investigative police stops},
	year = {2020}}

@article{shoub2020race,
	author = {Shoub, Kelsey and Epp, Derek A and Baumgartner, Frank R and Christiani, Leah and Roach, Kevin},
	date-added = {2021-04-13 08:45:17 -0400},
	date-modified = {2021-04-13 08:45:17 -0400},
	journal = {Journal of Race, Ethnicity and Politics},
	number = {3},
	pages = {481--508},
	publisher = {Cambridge University Press},
	title = {Race, Place, and Context: The Persistence of Race Effects in Traffic Stop Outcomes in the Face of Situational, Demographic, and Political Controls},
	volume = {5},
	year = {2020}}

@book{baumgartner2018suspect,
	author = {Baumgartner, Frank R and Epp, Derek A and Shoub, Kelsey},
	date-added = {2021-04-13 08:43:37 -0400},
	date-modified = {2021-04-13 08:43:37 -0400},
	publisher = {Cambridge University Press},
	title = {Suspect citizens: What 20 million traffic stops tell us about policing and race},
	year = {2018}}

@article{goel2016precinct,
	author = {Goel, Sharad and Rao, Justin M and Shroff, Ravi and others},
	date-added = {2021-01-14 08:59:54 -0500},
	date-modified = {2021-01-14 08:59:54 -0500},
	journal = {The Annals of Applied Statistics},
	number = {1},
	pages = {365--394},
	publisher = {Institute of Mathematical Statistics},
	title = {Precinct or prejudice? Understanding racial disparities in New York City's stop-and-frisk policy},
	volume = {10},
	year = {2016}}

@article{ridgeway2009doubly,
	author = {Ridgeway, Greg and MacDonald, John M},
	date-added = {2021-01-14 08:57:12 -0500},
	date-modified = {2021-01-14 08:57:12 -0500},
	journal = {Journal of the American Statistical Association},
	number = {486},
	pages = {661--668},
	publisher = {Taylor \& Francis},
	title = {Doubly robust internal benchmarking and false discovery rates for detecting racial bias in police stops},
	volume = {104},
	year = {2009}}

@article{ridgeway2006assessing,
	author = {Ridgeway, Greg},
	date-added = {2021-01-14 08:55:46 -0500},
	date-modified = {2021-01-14 08:55:46 -0500},
	journal = {Journal of quantitative criminology},
	number = {1},
	pages = {1--29},
	publisher = {Springer},
	title = {Assessing the effect of race bias in post-traffic stop outcomes using propensity scores},
	volume = {22},
	year = {2006}}

@article{grogger2006testing,
	author = {Grogger, Jeffrey and Ridgeway, Greg},
	journal = {Journal of the American Statistical Association},
	number = {475},
	pages = {878--887},
	publisher = {Taylor \& Francis},
	title = {Testing for racial profiling in traffic stops from behind a veil of darkness},
	volume = {101},
	year = {2006}}

@article{ertefaie2018discovering,
	author = {Ertefaie, Ashkan and Hsu, Jesse Y and Page, Lindsay C and Small, Dylan S and others},
	date-added = {2020-09-04 09:46:50 -0400},
	date-modified = {2020-09-04 09:46:50 -0400},
	journal = {Journal of the Royal Statistical Society Series C},
	number = {4},
	pages = {917--938},
	publisher = {Royal Statistical Society},
	title = {Discovering treatment effect heterogeneity through post-treatment variables with application to the effect of class size on mathematics scores},
	volume = {67},
	year = {2018}}

@book{Gerber:2012,
	address = {New York, NY},
	author = {Gerber, Alan S. and Donald P. Green},
	date-added = {2020-09-04 09:41:24 -0400},
	date-modified = {2020-09-04 09:41:24 -0400},
	publisher = {Norton},
	title = {Field Experiments: Design, Analysis, and Interpretation},
	year = {2012}}

@article{deaton2010instruments,
	author = {Deaton, Angus},
	date-added = {2020-09-04 09:37:21 -0400},
	date-modified = {2020-09-04 09:37:21 -0400},
	journal = {Journal of economic literature},
	number = {2},
	pages = {424--55},
	title = {Instruments, Randomization, and Learning about Development},
	volume = {48},
	year = {2010}}

@article{swanson2014think,
	author = {Swanson, Sonja A and Hern{\'a}n, Miguel A},
	date-added = {2020-09-04 09:37:07 -0400},
	date-modified = {2020-09-04 09:37:07 -0400},
	journal = {Statistical Science},
	number = {3},
	pages = {371},
	publisher = {NIH Public Access},
	title = {Think globally, act globally: an epidemiologist's perspective on instrumental variable estimation},
	volume = {29},
	year = {2014}}

@article{imbens2014instrumental,
	author = {Imbens, Guido},
	date-added = {2020-09-04 09:36:52 -0400},
	date-modified = {2020-09-04 09:36:52 -0400},
	institution = {National Bureau of Economic Research},
	journal = {Statistical Science},
	number = {3},
	pages = {323--358},
	title = {Instrumental variables: An econometrician's perspective},
	volume = {29},
	year = {2014}}

@article{Stephens:2013a,
	author = {Stephens, Alisa and Luke J. Keele and Marshall Joffe},
	date-added = {2020-09-04 09:36:20 -0400},
	date-modified = {2020-09-04 09:36:20 -0400},
	journal = {Journal of Causal Inference},
	title = {Estimating Post-Treatment Effect Modification With Generalized Structural Mean Models},
	volume = {in press},
	year = {2016}}

@article{edwards2019risk,
	author = {Edwards, Frank and Lee, Hedwig and Esposito, Michael},
	journal = {Proceedings of the National Academy of Sciences},
	number = {34},
	pages = {16793--16798},
	publisher = {National Acad Sciences},
	title = {Risk of being killed by police use of force in the United States by age, race--ethnicity, and sex},
	volume = {116},
	year = {2019}}

@article{pearl1999probabilities,
	author = {Pearl, Judea},
	journal = {Synthese},
	number = {1-2},
	pages = {93--149},
	publisher = {Springer},
	title = {Probabilities of causation: three counterfactual interpretations and their identification},
	volume = {121},
	year = {1999}}

@article{dawid2017probability,
	author = {Dawid, A Philip and Musio, Monica and Murtas, Rossella},
	journal = {Law, Probability and Risk},
	number = {4},
	pages = {163--179},
	publisher = {Oxford University Press},
	title = {The probability of causation},
	volume = {16},
	year = {2017}}

@article{montgomery2018conditioning,
	author = {Montgomery, Jacob M and Nyhan, Brendan and Torres, Michelle},
	date-added = {2020-09-02 12:57:42 +0000},
	date-modified = {2020-09-02 12:57:42 +0000},
	journal = {American Journal of Political Science},
	number = {3},
	pages = {760--775},
	publisher = {Wiley Online Library},
	title = {How conditioning on posttreatment variables can ruin your experiment and what to do about it},
	volume = {62},
	year = {2018}}

@article{keefe2020precinct,
	author = {John Keefe},
	howpublished = {\url{https://johnkeefe.net/nyc-police-precinct-and-census-data}},
	month = {jun},
	note = {Online: \url{https://johnkeefe.net/nyc-police-precinct-and-census-data}; Retrieved: August 31, 2020},
	title = {Sharing NYC Police Precinct Data},
	year = {2020}}

@article{fryer2019empirical,
	author = {Fryer, Roland G Jr},
	journal = {Journal of Political Economy},
	number = {3},
	pages = {1210--1261},
	publisher = {The University of Chicago Press Chicago, IL},
	title = {An empirical analysis of racial differences in police use of force},
	volume = {127},
	year = {2019}}

@article{robins1992identifiability,
	author = {Robins, James M and Greenland, Sander},
	journal = {Epidemiology},
	number = {2},
	pages = {143--155},
	title = {Identifiability and exchangeability for direct and indirect effects},
	volume = {3},
	year = {1992}}

@techreport{richardson2013single,
	author = {Richardson, Thomas S and Robins, James M},
	institution = {Center for the Statistics and the Social Sciences, University of Washington},
	number = {128},
	title = {Single world intervention graphs ({SWIGs}): A unification of the counterfactual and graphical approaches to causality},
	year = {2013}}

@book{pearl2009causality,
	author = {Pearl, Judea},
	publisher = {Cambridge University Press},
	title = {Causality: Models, Reasoning and Inference},
	year = {2009}}

@article{frangakis2002principal,
	author = {Frangakis, Constantine E and Rubin, Donald B},
	journal = {Biometrics},
	number = {1},
	pages = {21--29},
	publisher = {Wiley Online Library},
	title = {Principal stratification in causal inference},
	volume = {58},
	year = {2002}}

@article{elwert2014endogenous,
	author = {Elwert, Felix and Winship, Christopher},
	journal = {Annual Review of Sociology},
	pages = {31--53},
	publisher = {Annual Reviews},
	title = {Endogenous selection bias: The problem of conditioning on a collider variable},
	volume = {40},
	year = {2014}}

@article{paternoster2017genetic,
	author = {Paternoster, Lavinia and Tilling, Kate and Davey Smith, George},
	journal = {PLoS Genetics},
	number = {10},
	pages = {e1006944},
	publisher = {Public Library of Science San Francisco, CA USA},
	title = {Genetic epidemiology and Mendelian randomization for informing disease therapeutics: Conceptual and methodological challenges},
	volume = {13},
	year = {2017}}

@article{knox2020administrative,
	author = {Knox, Dean and Lowe, Will and Mummolo, Jonathan},
	journal = {American Political Science Review},
	publisher = {Cambridge University Press},
	title = {Administrative Records Mask Racially Biased Policing},
	volume = {in press},
	year = {2020}}

@article{rosenbaum1984consequences,
	author = {Rosenbaum, Paul R},
	journal = {Journal of the Royal Statistical Society: Series A (General)},
	number = {5},
	pages = {656--666},
	publisher = {Wiley Online Library},
	title = {The consequences of adjustment for a concomitant variable that has been affected by the treatment},
	volume = {147},
	year = {1984}}

@article{robins1989probability,
	author = {James Robins and Sander Greenland},
	date_added = {Tue Feb 2 09:57:04 2021},
	doi = {10.2307/2531765},
	journal = {Biometrics},
	number = 4,
	pages = 1125,
	title = {The Probability of Causation Under a Stochastic Model for Individual Risk},
	url = {https://doi.org/10.2307/2531765},
	volume = 45,
	year = 1989,
	Bdsk-Url-1 = {https://doi.org/10.2307/2531765}}

@article{knox2020toward,
	author = {Dean Knox and Jonathan Mummolo},
	journal = {Journal of Political Institutions and Political Economy},
	pages = {1-38},
	title = {Toward a General Causal Framework for the Study of Racial Bias in Policing},
	volume = {1},
	year = {2020}}

@article{christiani2021better,
	author = {Leah Christiani and Kelsey Shoub and Frank R. Baumgartner and Derek A. Epp and Kevin Roach},
	date_added = {Sat Apr 10 21:20:02 2021},
	doi = {10.1080/21565503.2021.1892782},
	journal = {Politics, Groups, and Identities},
	number = {nil},
	pages = {1-10},
	title = {Better for Everyone: Black Descriptive Representation and Police Traffic Stops},
	url = {https://doi.org/10.1080/21565503.2021.1892782},
	volume = {nil},
	year = 2021,
	Bdsk-Url-1 = {https://doi.org/10.1080/21565503.2021.1892782}}

@phdthesis{eckhouse2017descriptive,
	author = {Eckhouse, Laurel},
	school = {University of California, Berkeley},
	title = {Descriptive Representation and Equal Access to the Law: Race, Representation, and Crime Policy in America},
	year = {2017}}

\clearpage
\appendix

\renewcommand\thefigure{\thesection.\arabic{figure}}
\setcounter{figure}{0}

Below is the Online Supplement for ``A note on post-treatment
selection in studying racial discrimination in policing''.

\section{Average treatment effects conditional on the mediator}
\label{sec:app-a}

We assume the variables $(D,M,Y)$ are generated from a nonparametric
structural equation model: $D = f_D(\epsilon_D), M =
f_M(D,\epsilon_M), Y = f_Y(D,M,\epsilon_Y)$ where
$\epsilon_D,\epsilon_M,\epsilon_Y$ are mutually independent
\citep{pearl2009causality}. Potential outcomes for $M$ and $Y$ can be
defined by replacing random variables in the functions by fixed
values; for example, $M(d) = f_M(d,\epsilon_M),~d=0,1$. Because the
errors are independent, $D$, $\{M(0), M(1)\}$, and $\{Y(0,0), Y(0,1),
Y(1,0), Y(1,1)\}$ are mutually independent
\citep{richardson2013single}. We also make the mandatory assumption (\Cref{assump:m0y0}). The derivations below do not
need mediator monotonicity ($M(1) \ge M(0)$).

We next derive expressions of $\text{ATE}_{M=1}$ and
$\text{ATT}_{M=1}$ using two basic causal effects: $\beta_M = \E[M(1)
- M(0)]$, the racial bias in detainment, and $\beta_Y = \E[Y(1,1) -
Y(0,1)]$, the controlled direct effect of race on police violence. To
simplify the interpretation, we introduce a new variable to denote the
the principal stratum (see Figure 2 in KLM):

\[ S =
  \begin{cases} \text{always stop (al)}, & \text{if}~M(0) = M(1) = 1,
    \\ \text{minority stop (mi)}, & \text{if}~M(0) = 0, M(1) = 1, \\
    \text{majority stop (ma)}, & \text{if}~M(0) = 1, M(1) = 0, \\
    \text{never stop (ne)}, & \text{if}~M(0) = M(1) = 0, \\
  \end{cases}
\]

Let $\mathcal{S} = \{\text{al}, \text{mi}, \text{ma}, \text{ne}\}$ be
all possible values for $S$. Using this notation, we have
\[ \beta_M = \sum_{s \in \mathcal{S}}\E[M(1) - M(0) \mid S = s] \P(S =
  s) = \P(S = \text{mi}) - \P(S = \text{ma}).
\]

By using the independence between $M(d)$ and $Y(d,m)$ and
\Cref{assump:m0y0}, it is easy to show that
\[ \bm \theta =
  \begin{pmatrix} \E[Y(1) - Y(0) \mid S = \text{al}] \\ \E[Y(1) - Y(0)
    \mid S = \text{mi}] \\ \E[Y(1) - Y(0) \mid S = \text{ma}] \\ \E[Y(1) -
    Y(0) \mid S = \text{ne}] \\
  \end{pmatrix} =
  \begin{pmatrix} \E[Y(1,1) - Y(0,1)] \\ \E[Y(1,1) - Y(0,0)] \\
    \E[Y(1,0) - Y(0,1)] \\ \E[Y(1,0) - Y(0,0)] \\
  \end{pmatrix} =
  \begin{pmatrix} \beta_Y \\ \beta_Y + \E[Y(0,1)] \\ - \E[Y(0,1)] \\ 0
    \\
  \end{pmatrix}.
\]

Average treatment effects, whether conditional on $M$ or $D$ or not,
can be written as weighted averages of the entries of $\bm \theta$.

\begin{proposition} \label{prop:weighted-average} Suppose there is no
  unmeasured mediator-outcome confounder (i.e.\ no $U$) in
  \Cref{fig:dag}. Under \Cref{assump:m0y0}, the estimands $\text{ATE}_{M=1}$, $\text{ATT}_{M=1}$,
  $\text{ATE} = \E[Y(1) - Y(0)]$, and $\text{ATT} = \E[Y(1) - Y(0) \mid
  D = 1]$ can be written as weighted averages $(\bm w^T \bm \theta) /
  (\bm w^T \bm 1)$ ($\bm 1$ is the all-ones vector) with weights given
  by, respectively,
  \[ \bm{w}(\text{ATE}_{M=1}) =
    \begin{pmatrix} \P(S = \text{al}) \\ \big[\P(S = \text{ma}) +
      \beta_M\big] \P(D = 1) \\ \P(S = \text{ma}) \P(D = 0) \\ 0 \\
    \end{pmatrix},~\bm{w}(\text{ATT}_{M=1}) =
    \begin{pmatrix} \P(S = \text{al}) \\ \P(S = \text{ma}) + \beta_M
      \\ 0 \\ 0 \\
    \end{pmatrix},
  \] and
  \[ \bm{w}(\text{ATE}) = \bm{w}(\text{ATT}) =
    \begin{pmatrix} \P(S = \text{al}) \\ \P(S = \text{mi}) \\ \P(S =
      \text{ma}) \\ \P(S = \text{ne}) \\
    \end{pmatrix} =
    \begin{pmatrix} \P(S = \text{al}) \\ \P(S = \text{ma}) + \beta_M
      \\ \P(S = \text{ma}) \\ \P(S = \text{ne}) \\
    \end{pmatrix}.
  \]
\end{proposition}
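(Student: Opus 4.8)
The plan is to write each estimand as a weighted average over the four principal strata in $\mathcal S$ by conditioning on $S$, and to read off the weights $\bm w$ from how the relevant conditioning events intersect each stratum. The starting point is the SUTVA identity $M = M(D)$, which translates the observed-mediator event into a statement about $(S,D)$: since $M(0)$ and $M(1)$ are constant within each stratum, the event $\{M=1\}$ is the disjoint union $\{S=\text{al}\} \cup (\{S=\text{mi}\}\cap\{D=1\}) \cup (\{S=\text{ma}\}\cap\{D=0\})$, with the never-stop stratum contributing nothing. The analogous decomposition under $D=1$ is $\{D=1,M=1\} = \{D=1\}\cap\{S\in\{\text{al},\text{mi}\}\}$, because with $D=1$ we have $M=M(1)=1$ exactly on the always-stop and minority-stop strata.

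I would dispatch the two unconditional estimands first, as they are immediate. By the law of total expectation over $S$, $\text{ATE} = \sum_{s\in\mathcal S}\E[Y(1)-Y(0)\mid S=s]\,\P(S=s) = \bm w^T\bm\theta$ with $\bm w = (\P(S=s))_{s}$, and since $\bm w^T\bm 1 = 1$ the normalization is trivial. For the ATT I would invoke complete randomization of $D$: because $D$ is independent of $\{M(0),M(1)\}$ and of $\{Y(d,m)\}$, conditioning on $D=1$ changes nothing, so $\text{ATT}=\text{ATE}$ and the weights coincide. The second displayed form of the minority weight then follows from the already-derived identity $\beta_M = \P(S=\text{mi}) - \P(S=\text{ma})$.

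The substance is in $\text{ATE}_{M=1}$ and $\text{ATT}_{M=1}$, which I would write as ratios $\E[(Y(1)-Y(0))\mathbbm 1\{M=1\}]/\P(M=1)$ and its analogue conditioned on $D=1$. Expanding the numerator along the disjoint union above, each indicator becomes a product of a stratum indicator (possibly times $\mathbbm 1\{D=d\}$) with $Y(1)-Y(0)$, and the key step is to factor out $D$: because $D$ is independent of the pair $(S, \{Y(d,m)\})$, I obtain terms such as $\E[(Y(1)-Y(0))\mathbbm 1\{S=\text{mi}\}]\,\P(D=1) = \theta_{\text{mi}}\,\P(S=\text{mi})\,\P(D=1)$. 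Collecting the three surviving strata reproduces $\bm w(\text{ATE}_{M=1})$, and the identical expansion of $\P(M=1)$ shows the denominator equals $\bm w^T\bm 1$, giving the claimed ratio form. For $\text{ATT}_{M=1}$ the same factoring places a common factor $\P(D=1)$ in both numerator and denominator; cancelling it leaves weights supported on the always-stop and minority-stop strata, matching $\bm w(\text{ATT}_{M=1})$.

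The main obstacle, and the place I would be most careful, is the bookkeeping around conditioning on the post-treatment event $\{M=1\}$: it is tempting but wrong to treat the selection event as independent of the potential outcomes $Y(d,m)$. The resolution is never to condition on $M$ directly, but to first rewrite $\{M=1\}$ purely in terms of the randomized treatment $D$ and the pre-treatment stratum $S$, after which the independence $D\perp(S,\{Y(d,m)\})$ does all the work. Everything else—identifying the surviving strata in each decomposition and converting between $\P(S=\text{mi})$ and $\P(S=\text{ma})+\beta_M$—is routine verification.
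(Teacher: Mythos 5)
Your proof is correct and takes essentially the same route as the paper's: decompose over the principal strata, use the randomization of $D$ (its independence from $\{M(0),M(1)\}$ and the $Y(d,m)$'s) to identify the within-stratum effects with the entries of $\bm\theta$, and obtain the selection weights from the fact that $\{M=1\}$ intersected with each stratum is determined solely by $D$. Your phrasing via unconditional indicator expansions (never conditioning on $M$, instead writing $\{M=1\}$ as a union of $(S,D)$-events) is a cosmetic repackaging of the paper's conditional-expectation-plus-Bayes computation, with the same decomposition, the same key independence, and the same resulting weights.
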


\begin{proof} Let's first consider $\text{ATE}_{M=1}$. By using the
  law of total expectations, we can first decompose it into a weighted
  average of principal stratum effects:
  \begin{align*} \text{ATE}_{M=1} &= \E[Y(1) - Y(0) \mid M = 1] =
                                    \sum_{s \in \mathcal{S}} \E[Y(1) - Y(0) \mid M = 1, S = s] \cdot \P(S
                                    = s \mid M = 1).
  \end{align*} We can simplify the principal stratum effects using
  recursive substitution of the potential outcomes and the assumption
  that $D$, $\{M(0), M(1)\}$, and $\{Y(0,0), Y(0,1), Y(1,0), Y(1,1)\}$
  are mutually independent. For $m_0,m_1 \in \{0,1\}$,
  \begin{align*} &\E[Y(1) - Y(0) \mid M = 1, M(0) = m_0, M(1) = m_1]
    \\ =& \E[Y(1, M(1)) - Y(0, M(0)) \mid M = 1, M(0) = m_0, M(1) = m_1]
    \\ =& \E[Y(1, m_1) - Y(0, m_0) \mid M = 1, M(0) = m_0, M(1) = m_1] \\
    =& \E[Y(1, m_1) - Y(0, m_0) \mid M(0) = m_0, M(1) = m_1] \\ =& \E[Y(1,
                                                                   m_1) - Y(0, m_0)].
  \end{align*} The third equality uses the fact that $M \independent
  \{Y(1,m_1), Y(0,m_0)\} \mid \{M(0), M(1)\}$, because given $\{M(0),
  M(1)\}$ the only random term in $M = D \cdot M(1) + (1-D) \cdot M(0)$
  is $D$. Thus $\text{ATE}_{M=1}$ can be written as
  \[ \text{ATE}_{M=1} =
    \bm\theta^T\bm{w}(\text{ATE}_{M=1}),~\text{where}~\bm{w}(\text{ATE}_{M=1})
    =
    \begin{pmatrix} \P(S = \text{al} \mid M = 1) \\ \P(S = \text{mi}
      \mid M = 1) \\ \P(S = \text{ma} \mid M = 1) \\ \P(S = \text{ne} \mid M
      = 1) \\
    \end{pmatrix}.
  \] Similarly, $\text{ATT}_{M=1}$, $\text{ATE}$, and $\text{ATT}$ can
  also be written as weighted averages of the entries of $\bm \theta$,
  where the weights are
  \[ \bm{w}(\text{ATT}_{M=1}) =
    \begin{pmatrix} \P(S = \text{al} \mid D = 1, M = 1) \\ \P(S =
      \text{mi} \mid D = 1, M = 1) \\ \P(S = \text{ma} \mid D = 1, M = 1) \\
      \P(S = \text{ne} \mid D = 1, M = 1) \\
    \end{pmatrix},~ \bm{w}(\text{ATE}) = \bm{w}(\text{ATT}) =
    \begin{pmatrix} \P(S = \text{al}) \\ \P(S = \text{mi}) \\ \P(S =
      \text{ma}) \\ \P(S = \text{ne}) \\
    \end{pmatrix}.
  \]

  Next we compute the conditional probabilities for the principal
  strata in $\bm{w}(\text{ATE}_{M=1})$ and
  $\bm{w}(\text{ATT}_{M=1})$. By using Bayes'' formula, for any $m_0,m_1
  \in \{0,1\}$,
  \begin{align*} &\P(M(0) = m_0, M(1) = m_1 \mid M = 1) \\ \propto&
                                                                    \P(M(0) = m_0, M(1) = m_1) \cdot \P(M = 1 \mid M(0) = m_0, M(1) = m_1)
    \\ =& \P(M(0) = m_0, M(1) = m_1) \cdot \sum_{d=0}^1 \P(M = 1, D = d
          \mid M(0) = m_0, M(1) = m_1) \\ =& \P(M(0) = m_0, M(1) = m_1) \cdot
                                             \sum_{d=0}^1 1_{\{m_d = 1\}} \P(D = d \mid M(0) = m_0, M(1) = m_1) \\
    =& \P(M(0) = m_0, M(1) = m_1) \cdot \sum_{d=0}^1 1_{\{m_d = 1\}} \P(D
       = d).
  \end{align*} The last two equalities used $M=M(D)$ and $D
  \independent \{M(0), M(1)\}$. For this, it is straightforward to
  obtain the form of $\bm{w}(\text{ATE}_{M=1})$ in
  \Cref{prop:weighted-average}. Similarly,
  \begin{align*} \P(M(0) = m_0, M(1) = m_1 \mid D = 1, M = 1) \propto
    \P(M(0) = m_0, M(1) = m_1) \cdot 1_{\{m_1 = 1\}}.
  \end{align*} From this we can derive the form of
  $\bm{w}(\text{ATT}_{M=1})$ in \Cref{prop:weighted-average}.
\end{proof}

\begin{proposition} \label{prop:pie-pde} Under the same assumptions as
  above, $\text{PIE} = \beta_M \cdot \E[Y(1,1)]$ and $\text{PDE} =
  \beta_Y \cdot \E[M(0)]$.
\end{proposition}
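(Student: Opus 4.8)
The plan is to rewrite each effect as the expectation of a product of a mediator potential outcome and an outcome potential outcome, and then to factor the result using the mutual independence of $D$, $\{M(0),M(1)\}$, and the four outcome potential outcomes asserted above. The single ingredient driving both identities is that, because $M(d)\in\{0,1\}$, any nested potential outcome linearizes as $Y(d,M(d'))=M(d')\,Y(d,1)+(1-M(d'))\,Y(d,0)$; then \Cref{assump:m0y0}(i) sets $Y(1,0)=Y(0,0)=0$, collapsing this to $Y(d,M(d'))=M(d')\,Y(d,1)$ for $d=0,1$.

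For the pure indirect effect, I would start from $\text{PIE}=\E[Y(1,M(1))-Y(1,M(0))]$ and apply the linearization to both terms, obtaining $\text{PIE}=\E[(M(1)-M(0))\,Y(1,1)]$. Since $\{M(0),M(1)\}$ is independent of $Y(1,1)$, this factors as $\E[M(1)-M(0)]\cdot\E[Y(1,1)]=\beta_M\cdot\E[Y(1,1)]$, which is the claimed form.

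For the pure direct effect, I would start from $\text{PDE}=\E[Y(1,M(0))-Y(0,M(0))]$ and linearize each term against the \emph{same} mediator value $M(0)$, giving $\text{PDE}=\E[M(0)\,(Y(1,1)-Y(0,1))]$. Independence of $M(0)$ from $\{Y(1,1),Y(0,1)\}$ then yields $\E[M(0)]\cdot\E[Y(1,1)-Y(0,1)]=\E[M(0)]\cdot\beta_Y$, as required.

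There is no genuine analytic difficulty here; the only point demanding care is bookkeeping of which potential outcome each mediator value multiplies. In particular, one must keep the two mediator potential outcomes $M(1)$ and $M(0)$ distinct in the PIE (so that their difference produces $\beta_M$), while recognizing that the PDE contrasts two outcome potential outcomes evaluated at the common natural value $M(0)$ (so that their difference produces $\beta_Y$). Once mandatory reporting has eliminated the $Y(\cdot,0)$ terms, the factorization is immediate from the stated mutual independence, and no appeal to mediator monotonicity is needed.
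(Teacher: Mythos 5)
Your proof is correct and takes essentially the same approach as the paper's: both arguments use mandatory reporting to eliminate the $Y(\cdot,0)$ terms and the independence of $\{M(0),M(1)\}$ from the outcome potential outcomes to factorize $\E\big[Y(d,M(d'))\big] = \E\big[Y(d,1)\big]\cdot \P(M(d')=1)$. The paper obtains this identity by conditioning on $M(d')$ (law of total expectation) rather than your pointwise linearization $Y(d,M(d'))=M(d')\,Y(d,1)$, but this is a purely cosmetic difference.
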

\begin{proof} This follows from the definition of pure direct and
  indirect effects and the following identity,
  \begin{align*} \E\big[Y(d, M(d'''))\big] = \E\big[Y(d, 1) \mid M(d') =
    1\big] \cdot \P(M(d') = 1) = \E\big[Y(d, 1)\big] \cdot \P(M(d') = 1),
  \end{align*} for any $d,d' \in \{0,1\}$.
\end{proof}

Using the forms of weighted averages in \Cref{prop:weighted-average},
we can make the following observation on the sign of the causal
estimands when $\beta_M$ and $\beta_Y$ are both nonnegative or both
nonpositive:

\begin{corollary} \label{cor:paradox} Let the assumptions in
  \Cref{prop:weighted-average} be given.  If $\beta_M \ge 0$ and $\beta_Y
  \ge 0$, then $\text{ATE} = \text{ATT} \ge 0$. Conversely, if $\beta_M \le 0$
  and $\beta_Y \le 0$, then $\text{ATE} = \text{ATT} \le 0$. However, both
  of these properties are not true for $\text{ATE}_{M=1}$ and the second
  property is not true for $\text{ATT}_{M=1}$.
\end{corollary}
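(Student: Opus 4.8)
The plan is to split the statement into two parts: the sign-consistency of the global $\text{ATE}=\text{ATT}$, which I would prove directly, and its failure for $\text{ATE}_{M=1}$ and $\text{ATT}_{M=1}$, which I would establish by explicit counterexamples built from the weighted-average representations in \Cref{prop:weighted-average}.

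For the first part, I would note that $\text{ATE}=\text{ATT}$ is immediate, since \Cref{prop:weighted-average} gives $\bm w(\text{ATE})=\bm w(\text{ATT})$ against the same $\bm\theta$. To pin down the sign I would expand the weighted average $\bm\theta^T\bm w(\text{ATE})$, whose denominator $\bm w^T\bm 1$ equals $1$ because the weights are the stratum probabilities. Using $\P(S=\text{mi})-\P(S=\text{ma})=\beta_M$, the expansion collapses to $\text{ATE}=\beta_Y\,\E[M(1)]+\beta_M\,\E[Y(0,1)]$; equivalently one may simply cite \Cref{prop:pie-pde}, which yields $\text{ATE}=\text{PIE}+\text{PDE}=\beta_M\,\E[Y(1,1)]+\beta_Y\,\E[M(0)]$. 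Since $\E[M(1)]$, $\E[M(0)]$, $\E[Y(0,1)]$ and $\E[Y(1,1)]$ are all probabilities, hence nonnegative, each summand inherits the sign of its coefficient, and the asserted sign consistency is exactly Equation~\eqref{eq:consistent-property}.

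For the failure part, the key observation is that the conditional weights reweight the ``mi'' and ``ma'' strata by the propensities $\P(D=1)$ and $\P(D=0)$, decoupling the sign of the estimand from those of $\beta_M,\beta_Y$. Expanding $\bm\theta^T\bm w(\text{ATE}_{M=1})$ gives a numerator proportional to $\beta_Y\big(\P(\text{al})+\P(\text{mi})\P(D{=}1)\big)+\E[Y(0,1)]\big(\P(\text{mi})\P(D{=}1)-\P(\text{ma})\P(D{=}0)\big)$. The bracketed difference can be made negative even when $\beta_M=\P(\text{mi})-\P(\text{ma})\ge 0$, by taking $\P(D{=}0)$ large; a moderate $\E[Y(0,1)]$ together with a small $\beta_Y>0$ then forces $\text{ATE}_{M=1}<0$. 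I would exhibit one numerical tuple $\big(\P(\text{al}),\P(\text{mi}),\P(\text{ma}),\P(\text{ne})\big)$, $\P(D{=}1)$, $\E[Y(0,1)]$, $\beta_Y$ realizing this, and the mirror-image tuple (swapping the roles of $\P(D{=}1),\P(D{=}0)$ and of mi/ma) giving $\beta_M,\beta_Y\le 0$ but $\text{ATE}_{M=1}>0$. For $\text{ATT}_{M=1}$ the analogous expansion yields $\text{ATT}_{M=1}=\beta_Y+\E[Y(0,1)]\,\P(\text{mi})/\big(\P(\text{al})+\P(\text{mi})\big)$; the second term is always nonnegative, so the first property in fact continues to hold, while a small $\beta_Y<0$ dominated by this positive term produces $\text{ATT}_{M=1}>0$ with $\beta_M,\beta_Y\le 0$. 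Conveniently, the same parameters used for the second $\text{ATE}_{M=1}$ counterexample also give $\text{ATT}_{M=1}>0$, so both reversals occur simultaneously.

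The main obstacle I anticipate is not any single computation but the bookkeeping of feasibility: each counterexample must respect $\sum_s\P(S=s)=1$, keep $\P(D{=}1)\in[0,1]$, and keep both $\E[Y(0,1)]$ and $\E[Y(1,1)]=\E[Y(0,1)]+\beta_Y$ inside $[0,1]$, all while forcing the reweighted ``ma'' contribution to overwhelm the $\beta_Y$ term. The lever that makes this feasible is to take $\beta_Y$ small, $\E[Y(0,1)]$ moderate, and the propensity imbalance $\P(D{=}0)/\P(D{=}1)$ large; once such a regime is fixed, verifying the inequalities is routine.
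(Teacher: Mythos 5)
Your proposal is correct and follows essentially the same route as the paper: sign-consistency of $\text{ATE}=\text{ATT}$ via the PIE/PDE decomposition of \Cref{prop:pie-pde}, and failure for the mediator-conditional estimands via counterexamples constructed from the weighted-average representation in \Cref{prop:weighted-average}. Your closed-form expansions (e.g.\ $\text{ATT}_{M=1}=\beta_Y+\E[Y(0,1)]\,\P(S=\text{mi})/\big(\P(S=\text{al})+\P(S=\text{mi})\big)$, which also confirms that only the second property fails for $\text{ATT}_{M=1}$) make explicit the structure behind the paper's three numerical counterexamples, which sit in exactly the parameter regime you describe (small $|\beta_Y|$, moderate $\E[Y(0,1)]$, extreme $\P(D=1)$).
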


The fact that $\text{ATT}$ and $\text{ATE}$ would have the same sign
as $\beta_M$ when $\beta_M$ and $\beta_Y$ have the same sign follows
immediately from \Cref{prop:pie-pde}. However, this important property
does not hold for $\text{ATE}_{M=1}$ and $\text{ATT}_{M=1}$. Here are
some concrete counterexamples:
\begin{enumerate}
\item When $\beta_M = \beta_Y = 0.01$, $\P(S = \text{al}) = 0.1$,
  $\P(S = \text{ma}) = 0.05$, $\E[Y(0,1)] = 0.1$, and $\P(D = 1) =
  0.01$, we have $\text{ATE}_{M=1} = -0.003884$.
\item When $\beta_M = \beta_Y = -0.01$, $\P(S = \text{al}) = 0.1$,
  $\P(S = \text{ma}) = 0.05$, $\E[Y(0,1)] = 0.1$, and $\P(D = 1) =
  0.99$, we have $\text{ATE}_{M=1} = 0.002514$.
\item When $\beta_M = \beta_Y = -0.01$, $\P(S = \text{al}) = 0.1$,
  $\P(S = \text{ma}) = 0.05$, $\E[Y(0,1)] = 0.1$, and $\P(D = 1) =
  0.01$, we have $\text{ATT}_{M=1} = 0.0026$.
\end{enumerate}

Heuristically, this is due to the fact that all of the causal
estimands above, including $\beta_M$,
$\beta_Y$, $\text{ATE}$, $\text{ATE}_{M=1}$, and $\text{ATT}_{M=1}$,
only measure some weighted average treatment effect for
police detainment and/or use of force. Conditioning on the
post-treatment $M$ may correspond to unintuitive weights. The
possibility that $\text{ATE}_{M=1}$ and $\text{ATE}$ can have
different signs can be understood from the following iterated expectation:
\[
  \text{ATE} = \text{ATE}_{M=1} \P(M=1) + \E[Y(1) - Y(0) \mid M = 0]
  \P(M = 0).
\]
In this decomposition, the second term may be nonzero and have the
opposite sign of $\text{ATE}_{M=1}$. An inexperienced researcher might
be tempted to drop the second term because of \Cref{assump:m0y0}, as
$Y(0,0) = Y(1,0) = 0$ with probability 1. However, conditioning on
$M=0$ is not the same as the intervention that sets
$M=0$. This means that we cannot deduce $\E[Y(d) \mid M = 0] = 0$ from
$Y(d,0)=0$, because $\E[Y(d) \mid M = 0] = \E[Y(d, M(d)) \mid M = 0]$
is not necessarily equal to $\E[Y(d, 0) \mid M = 0]$.

The fundamental problem driving this paradox is that conditioning on
the post-treatment variable $M$
alters the weights on the principal strata, as shown in
\Cref{prop:weighted-average}. $\text{ATE}_{M=1}$ and
$\text{ATT}_{M=1}$ then depend on not only the racial bias in
detainment and use of force (captured by $\beta_M$ and $\beta_Y$) but
also the baseline rate of violence $\E[Y(0,1)]$ and the composition of
race $\P(D = 1)$. For instance, in the first counterexample above,
even though the minority group $D=1$ is discriminated against in both
detainment and use of force, because the baseline violence is high and
the minority group is extremely small, $\text{ATE}_{M=1}$ becomes
mostly determined by the smaller bias (captured by $\P(S = \text{ma})
= \P(M(0) = 1, M(1) = 0)$) experienced by the much larger majority
group.

We make some further comments on the above paradox. First of all, the
second counterexample can be eliminated if we additionally assume
$\P(D = 1) < 0.5$, that is $D=1$ indeed represents the minority
group. With this benign assumption, one can show that
$\text{ATE}_{M=1} < 0$ whenever $\beta_M, \beta_Y < 0$. Furthermore,
it can be shown that $\text{ATT}_{M=1} < 0$ whenever $\beta_M,\beta_Y
> 0$. So in a very rough sense we might say that as causal estimands,
$\text{ATE}_{M=1}$ is unfavorable for the minority group (because
$\text{ATE}_{M=1}$ can be negative even if both $\beta_M,\beta_Y > 0$)
and $\text{ATT}_{M=1}$ is unfavorable for the majority group (because
$\text{ATT}_{M=1}$ can be positive even if both $\beta_M,\beta_Y <
0$).

Our second comment is about the first counterexample. We can eliminate
such possibility by assuming mediator monotonicity $\P(S = \text{ma})
= 0$, or in other words, by assuming that the majority race group is
never discriminated against in any police-civilian encounter. KLM
indeed used mediator monotonicity to obtain bounds on
$\text{ATE}_{M=1}$ and $\text{ATT}_{M=1}$. So a supporter of the
estimand $\text{ATE}_{M=1}$ may argue that if one is willing to assume
mediator monotonicity, there is no paradox regarding
$\text{ATE}_{M=1}$. However, it is worthwhile to point out that under
mediator monotonicity, the pure indirect effect is guaranteed to be
nonnegative because $\beta_M = \P(S = \text{mi}) - \P(S = \text{ma}) =
\P(S = \text{mi}) \ge 0$. Empirical researchers should be mindful of
and clearly communicate the consequences of the mediator monotonicity
assumption unless it is compelling in the specific application. See
KLM's discussion after their Assumption 2 on when mediator
ignorability may be violated. This concern can be alleviated if future
work can incorporate non-zero $\P(S = \text{ma})$ as sensitivity
parameters in KLM's bounds.

\section{Derivation of the causal risk ratio}
\label{sec:app-risk-ratio}

To simplify the derivation, we will omit the conditioning on $X = x$
below. Fix a $d \in \{0, 1\}$. Using \Cref{assump:m0y0}, $\E[Y(d) \mid
M(d) = 0] = \E[Y(d,0) \mid M(d) = 0] = 0$. Therefore
\begin{align*} \E[Y(d)] &= \E[Y(d) \mid M(d) = 1] \cdot \P(M(d) = 1)
  \\ &= \E[Y(d, 1) \mid M(d) = 1] \cdot \P(M(d) = 1) \\ &= \E[Y(d, 1)
                                                          \mid M(d) = 1, D = d] \cdot \P(M(d) = 1) \\ &= \E[Y \mid M = 1, D = d] \cdot \P(M(d) =
                                                                                                        1).
\end{align*} The third equality above uses treatment ignorability: $D
\independent Y(d,1) \mid M(d)$ (this follows from the single world
intervention graph corresponding to \Cref{fig:dag}); the last equality
follows from the
consistency (or stable unit value treatment) assumption for potential
outcomes. By further using $D \independent M(d)$,
we have $\P(M(d) = 1) = \P(M(d) = 1 \mid D = d) = \P(M = 1 \mid D =
d)$. Plugging this into the last display equation, we have
\[ \E[Y(d)] = \E[Y \mid M = 1, D = d] \cdot \P(M = 1 \mid D =
  d),~d=0,1.
\] Thus we have recovered KLM's Proposition 2 (point identification of
ATE) without assuming their Assumption 2 (mediator monotonicity) and
Assumption 3 (relative nonseverity of racial stops). To get the causal
risk ratio, we only needs to take a ratio between $\E[Y(1)]$ and
$\E[Y(0)]$ and apply Bayes' formula to cancel $\P(M = 1)$.

\section{Implementation details of the empirical analysis}
\label{sec:survey}

To estimate encounter rates in our empirical analysis
using the PPCS data we used the following three survey questions:

\begin{quote}
  The following are questions about any time in the last 12
  months when police have initiated contact with you. In the last 12 months, have you:

  \begin{description}
  \item[V11] Been stopped by the police while in a public place, but not a moving vehicle? This includes being in a parked vehicle.
  \item[V13] Been stopped by the police while driving a motor vehicle?
  \item[V21] Have you been stopped or approached by the police in the last 12 months for something I haven't mentioned?
  \end{description}
\end{quote}

We created two binary measures as indicators of police encounters.
The first measure (Stop in Public in \Cref{tab:est}) was 1 for being stopped by the police if
the respondent answered Yes to either V11 or V21 and 0 otherwise.
We used V13 as the measure for being stopped in a motor vehicle (MV
Stop in \Cref{tab:est}).

In our alternative analysis (labelled as PPCS* in \Cref{tab:est}), the
stop indicators are weighted by the responses to the following question :
\begin{quote}
  \begin{description}
  \item[V30] Thinking about the times you initiated contact with the police and the
    times they initiated contact with you, how many face-to-face contacts
    did you have with the police during the last 12 months?
  \end{description}
\end{quote}
In that analysis, we excluded outliers with more than 30 reported
contacts with the police.

% Specifically, we weighted the PPCS respondents the two measures of
% police encounters by the reported number of face-to-face contacts with
% the police.

\section{Stratified analysis by age and gender}
\label{sec:strat-analys-age}

Our identification \eqref{eq:rr-identification} of the causal risk
ratio depends on conditioning on all the confounders in $X$. Here we
report the results of an additional analysis where the police-civilian
encounters were stratified by the age and gender of the
civilian. Similarly, the survey respondents were also by their age and
gender. The same analysis that generated \Cref{tab:est} were repeated
for each stratum, and the results are reported in
\Cref{fig:stratified}. It appears that gender is an important effect
modifier but age is not.

\begin{figure}[ht]
  \centering
  \includegraphics[width = 0.9\textwidth]{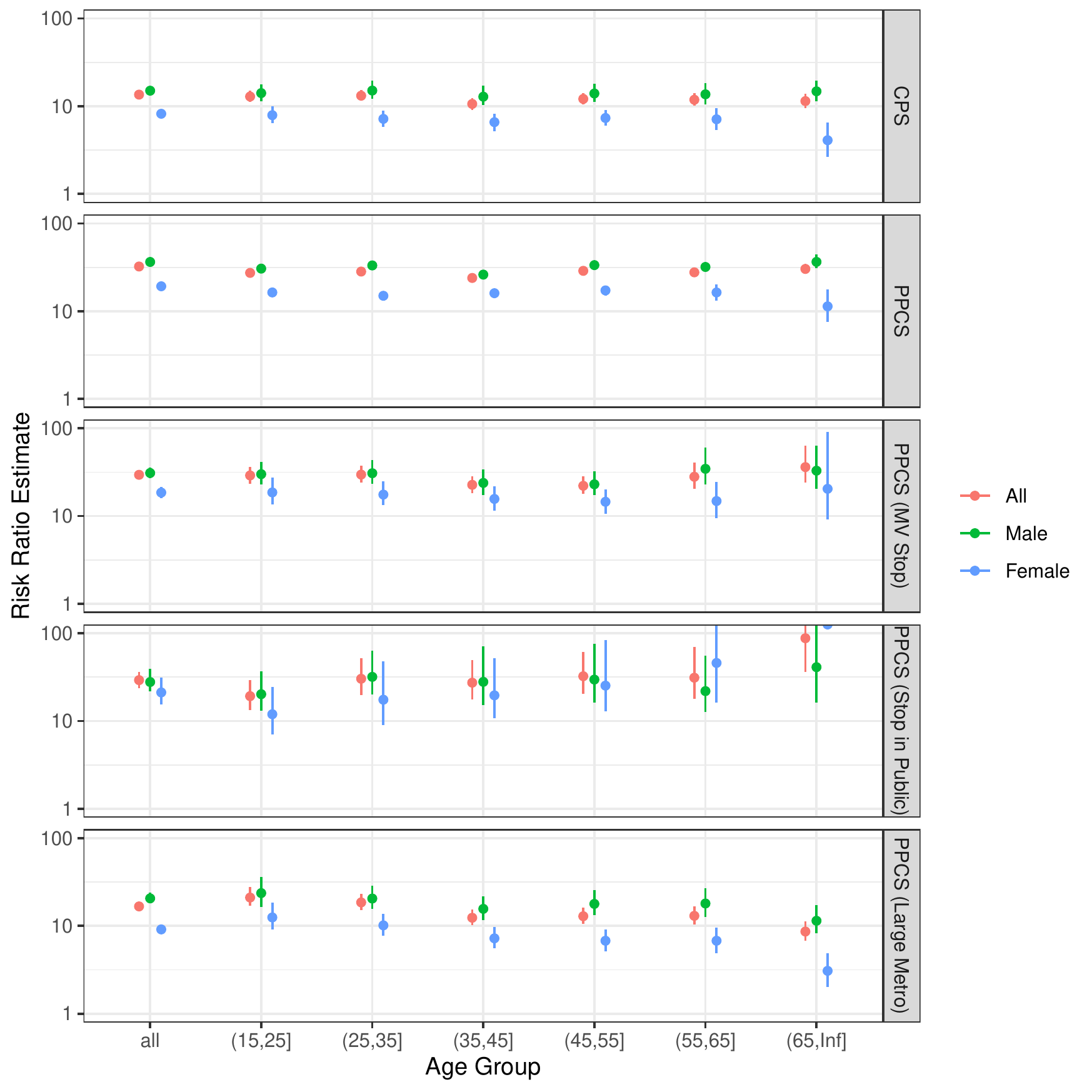}
  \caption{Results of the stratified analysis of the NYPD
    Stop-and-Frisk dataset by age and gender. The estimated risk ratio
    is truncated at 100.}
  \label{fig:stratified}
\end{figure}

\end{document}